\renewcommand*\env@matrix[1][*\c@MaxMatrixCols c]{%
	\hskip -\arraycolsep
	\let\@ifnextchar\new@ifnextchar
	\array{#1}}
\newcommand{\Aa}{\mathcal{A}}
\newcommand{\Bb}{\mathcal{B}}
\newcommand{\aA}{\mathbf{A}}
\newcommand{\bB}{\mathbf{B}}
\newcommand{\bbB}{\bar{\mathbf{B}}}
\newcommand{\yy}{\mathbf{y}}
\newcommand{\xx}{\mathbf{x}}
\newcommand{\vv}{\mathbf{v}}
\newcommand{\ff}{\mathbf{f}}
\newcommand{\aaa}{\mathbf{a}}
\newcommand{\bbb}{\mathbf{b}}
\newcommand{\HH}{\mathbf{H}}
\newcommand{\hHH}{\bar{\mathbf{H}}}
\newcommand{\FF}{\mathbf{F}}
\newcommand{\GG}{\mathbf{G}}
\newcommand{\gGG}{\bar{\mathbf{G}}}
\newcommand{\VV}{\mathbf{V}}
\newcommand{\ww}{\mathbf{w}}
\newcommand{\UU}{\mathbf{U}}
\newcommand{\II}{\mathbf{I}}
\newcommand{\CN}{\mathcal{CN}}
\newcommand{\C}{\mathbb{C}}
\newcommand{\Grass}{G_{N,M}(\C)}
\newcommand{\Grassss}{G_{2M,M}(\C)}
\newcommand{\OO}{\mathbf{O}}
\theoremstyle{definition}
\newtheorem{Lem}{Lemma}
\begin{document}
\title{Low Complexity Opportunistic Interference Alignment in $K$-Transmitter MIMO Interference Channels}


\author{\IEEEauthorblockN{Atul Kumar Sinha}
\IEEEauthorblockA{Department of Electrical Engineering\\
Indian Institute of Technology\\
Kanpur, India - 208016\\
Email: atulkumarin@gmail.com}
\and
\IEEEauthorblockN{A.~K.~Chaturvedi}
\IEEEauthorblockA{Department of Electrical Engineering\\
	Indian Institute of Technology\\
	Kanpur, India - 208016\\
	Email: akc@iitk.ac.in}}

\maketitle

\begin{abstract}
In this paper, we propose low complexity opportunistic methods for interference alignment in $K$-transmitter MIMO interference channels by exploiting multiuser diversity. We do not assume availability of channel state information (CSI) at the transmitters. Receivers are required to feed back analog values indicating the extent to which the received interference subspaces are aligned. The proposed opportunistic interference alignment (OIA) achieves sum-rate comparable to conventional OIA schemes but with a significantly reduced computational complexity. 
\end{abstract}

\begin{IEEEkeywords}
Interference alignment, user selection, user pairing, sum-rate, multiple-input multiple-output (MIMO), interference channel
\end{IEEEkeywords}

\IEEEpeerreviewmaketitle

\section{Introduction}
Interference alignment (IA) is a promising interference management technique for future wireless networks which are interference limited, such as, MIMO interference channels (IC), MIMO interfering broadcast channels (IFBC), etc. It was demonstrated in \cite{cadambe2008} that IA can achieve a sum degree-of-freedom (DoF) of $\frac{K}{2}$ in a $K$-user SISO interference channel. IA utilizes multiple signaling dimensions (due to multiple antennas or time/frequency extensions) to suppress the received interference into a reduced dimensional subspace of the receive space.

Conventional methods for interference alignment \cite{cadambe2008, gomadam2011, peters2009, papailiopoulos2012} depend on one or more of global channel state information, channel state information at transmitters, reciprocity of downlink-uplink channels, transmitter cooperation or iterative methods. If these assumptions are relaxed, it is not possible to achieve interference alignment by employing transmit precoding in order to align the interferences received at the receivers and thus we rely on opportunistic methods to select users for which the interferences are naturally aligned.

Low overhead feed back based OIA has been proposed in \cite{OIA5, OIA1, OIA2}. A $2 \times 2$ MIMO IC with $3$-transmitters has been considered in \cite{OIA5}, while \cite{OIA1, OIA2} extend it to the case of $M \times 2M$ MIMO IC, again with $3$-transmitters. In these works, it is assumed each transmitter has a separate user group. In each group, a single user is selected by the corresponding transmitter for opportunistic IA. In this paper, we extend OIA for the general case of $K$-transmitter MIMO IC. Further, in order to better exploit the available user diversity, we consider the problem of user pairing for achieving opportunistic IA. We use a geometric interpretation of the signal space to define the {measure of alignment} which quantifies the extent to which the interference subspaces are aligned. Each transmitter broadcasts a reference signal and receivers calculate their corresponding measure of alignment and feed it back. Depending on the received values for the measure of alignment, the transmitters can select their user independently (user selection) or they can be paired with the users by a central node (user pairing). 

The remainder of this paper is organized as follows. In Section \ref{systemmodel}, we describe the system model. Section \ref{moa} describes the proposed choice for measure of alignment. Section \ref{oiaus} discusses about OIA in the user selection framework while Section \ref{oiaup} discusses OIA in the user pairing framework. Performance comparison is presented in Section \ref{results}. Finally, Section \ref{conclusion} concludes the paper.

\section{System Model}
\label{systemmodel}
We consider a network with $K$ transmitters and $N$ receivers. Each transmitter is equipped with $N_T$ antennas and each receiver is equipped with $N_R$ antennas. In a $K$-transmitter MIMO IC, each user receives $(K-1)$ interfering signals and one desired signal, each of dimension $N_T$. We let $N_T = M$ and $N_R = 2M$ so that $M$ dimensions can be designated for the desired data streams and the remaining $M$ dimensions for interference alignment, at each user. We consider two different system models, namely, user selection model and user pairing model.

\subsection{User Selection}
\label{systemmodelus}
In the user selection framework, we assume that there are $K$ cells, each with a single transmitter (base station). The receivers (users) are arbitrarily divided into $K$ groups of size ${S} = N/K$, where $N$ is the total number of users in the network. The signal $\yy_n^k \in \C^{N_R \times 1}$ received by the $n$th user in the $k$th cell is given by :
\begin{equation}
\label{eq:model1}
\yy_{n}^k = \underbrace{\HH_{n,k}^k\xx_k}_\text{desired signal}
+ \underbrace{\sum_{l \neq k}{\HH_{n,l}^k\xx_l}}_\text{interference signals} + \quad \ww_{n}^k
\end{equation}

where $\HH_{n,l}^k \in \C^{N_R \times N_T}$ is the channel gain matrix between $l$th base station (BS) and user $n$ in cell $k$ and with each entry assumed to be independently and identically distributed (i.i.d.) circular symmetric complex Gaussian (CSCG) random variable with unit variance $\CN(0,1)$, $\ww_n^k \in \C^{N_R \times 1}$ denotes an additive white Gaussian noise (AWGN) with $\ww_n^k \sim \CN(0, \II_{N_R})$ and $\xx_l \in \C^{N_T \times 1}$ is the signal vector transmitted by transmitter $l$, encoded by a Gaussian codebook.

\begin{figure}[t!]
	\centering
	\includegraphics[scale=0.36]{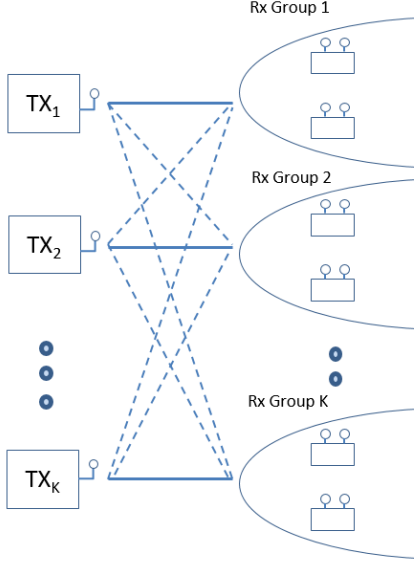} 
	\caption{User Selection Model : Each transmitter selects and serves a single user in each group}
\end{figure}

Fig.~1 depicts the user selection model. In each cell, out of a total of ${S}$ users, only one user is selected by the corresponding BS for data transmission. This selection is carried out such that opportunistic IA can be achieved and the exact procedure will be discussed in Section \ref{oiaus}.

\subsection{User Pairing}
\label{systemmodelup}

In the user pairing framework, we assume that each of the $K$ BSs can connect to any user in the network. In other words, BSs and users are not divided into distinct cells. For differentiating from the user selection model, we let $\GG_{n,k} \in \C^{N_R \times N_T}$ to be the channel gain matrix between BS $k$ and user $n$. The received signal $\yy_n \in \C^{N_R \times 1}$ at receiver $n$ is given by
\begin{equation}
\label{eq:modelchap5}
\yy_{n} = \sum_{k=1}^{K}{\GG_{n,k}\xx_k } + \ww_{n}
\end{equation}

with each entry of $\GG_{n,k}$ being i.i.d. CSCG random variable $\CN(0,1)$ and $\ww_n \in \C^{N_R \times 1}$ is AWGN at user $n$ with $\ww_n \sim \CN(0, \II_{N_R})$

We define the $N \times K$ pairing matrix $\mathbf{P}$ with $P_{i,j}$ as the entry on its $i$th row and $j$th column which is given by : 
\[
P_{i,j} = \begin{dcases*}
1, & if receiver $i$ and transmitter $j$ form a pair\\
0, & otherwise
\end{dcases*}
\]

where $\sum_{i}P_{i,j} = 1$ and $\sum_{j}P_{i,j} \leq 1$ to ensure that each BS is connected to exactly one user and that each `connected' user is connected to exactly one BS. The received signal at the $n$th user, $\yy_n$ can now be decomposed as :
\begin{align}
\label{eq:modelchap5pair}
\yy_{n} = \underbrace{\sum_{k=1}^{K}P_{n,k}\GG_{n,k}\xx_k}_\text{desired signal} 
+ \underbrace{\sum_{k=1}^{K}{(1- P_{n,k})\GG_{n,k}\xx_k}}_\text{interference signals} + \ww_{n}
\end{align}

\begin{figure}[t!]
	\centering
	\includegraphics[scale=0.475]{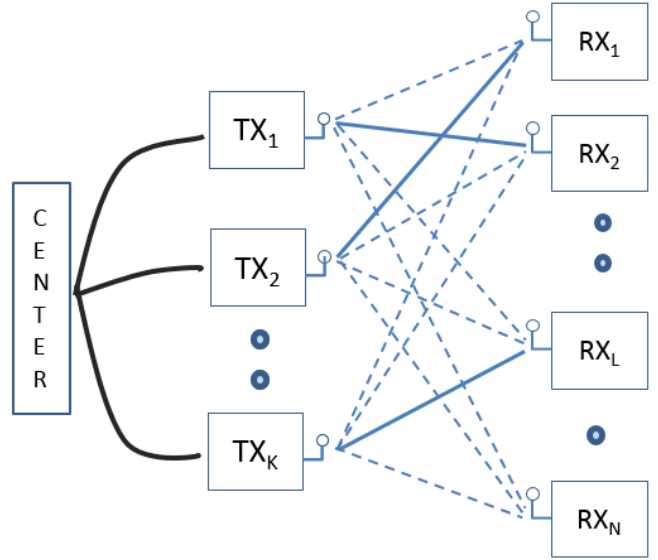} 
	\caption{User Pairing : Each transmitter is paired with one user by the center}
\end{figure}

Fig.~2 depicts the user pairing model. The pairing is enabled by the presence of a central system connecting each BS through backhaul links. The exact procedure for finding the transmitter-receiver pairing configuration or equivalently $\mathbf{P}$ such that opportunistic IA can be achieved, will be discussed in Section \ref{oiaup}. After the pairing configuration is found, the center can redirect the users' data to their corresponding BSs.
\\
In both the frameworks, since the transmitters do not have channel state information, an equal power allocation among $N_T$ data streams is assumed such that $\xx_l \sim \CN(0, \frac{P}{N_T}\II_{N_T})$, where $P$ is the transmit power constraint, assumed common, at all the BSs. Note that, if transmitter $l$ wants to convey $d$ data streams where $d < N_T$ it can employ an arbitrary precoding matrix $\VV_l \in \C^{N_T \times d}$ such that $\VV_l^H\VV_l = \II_d$ so that the system model remains statistically identical.

 
\section{Measure of Alignment}
\label{moa}

Recall that in a $K$-transmitter MIMO IC, each user will receive $(K-1)$ interference signals and one desired signal, each of dimension $N_T$. In order to quantify the suitability of a user for data transmission in the interference channel, we need to define a measure of alignment which quantifies the extent to which the received interference signals are aligned at each user. To that end, we will briefly review Grassmann manifold.

\subsection{Grassmann Manifold}
The Grassmann manifold $\Grass$ has been defined as the set of all $M$-dimensional subspaces of complex Euclidean $N$-dimensional space, $\C^{N}$ \cite{conway1996}. It is a widely used geometric concept in wireless communications and helps in the design and analysis of different methodologies. Let $\Aa \in \Grass$ be a $M$-dimensional subspaces. A $N \times M$ matrix $\aA$ is defined as \textit{generator matrix} for $\Aa$ if its columns span the subspace corresponding to $\Aa$ and forms an orthonormal bases for the same, i.e., $\aA^H\aA = \II_M$. Note that there can be infinitely many generator matrices for a given subspace $\Aa$ which can be obtained by the transformations $\aA \rightarrow \aA\UU$ where $\UU \in \C^{M \times M}$ is an arbitrary unitary matrix. For subspaces $\Aa, \Bb \in \Grass$, the $M$ ordered \textit{principal angles}, $\theta_1, \theta_2, \dots \theta_M \in [0, \pi/2]$ between the subspaces are obtained sequentially as %
\begin{align}
\label{eq:pangles}
&\text{cos} (\theta_m) = \underset{\begin{subarray}{c}
	\aaa \in \mathcal{A} \\
	\bbb \in \mathcal{B}
	\end{subarray}}{\text{max}} \: |\aaa^H \bbb| = |\aaa_m^H \bbb_m| \nonumber \\
& \hspace{55pt}  \text{s.t.} \quad ||\aaa || = ||\bbb || = 1  \nonumber \\
& \hspace{80pt}	\aaa^H \aaa_n = \bbb^H \bbb_n = 0 \: \forall \: n \in \mathcal{M}_m 
\end{align}

where $\mathcal{M}_m = \{1, 2, \dots, m-1\}$, while $\{\aaa_m\}_{m=1}^M$ and $\{\bbb_m\}_{m=1}^M$ are the \textit{principal vectors} for $\mathcal{A}$ and $\mathcal{B}$, respectively. 

The chordal distance between $\mathcal{A}$ and  $\mathcal{B}$ is defined as
\begin{equation}
d_c(\mathcal{A}, \mathcal{B}) = \sqrt{\sum_{m=1}^{M}\text{sin}^2(\theta_m)}
\end{equation}

Alternatively, chordal distance between the two subspaces $\mathcal{A}$ and $\mathcal{B}$ can be represented in terms of their generator matrices as
\begin{align}
\label{chordaldef}
d_c(\mathcal{A}, \mathcal{B}) = d_c(\aA, \bB) = \frac{1}{\sqrt{2}}||\aA\aA^H - \bB \bB^H||_F \\
= \sqrt{M -  \text{tr}(\aA^H\bB\bB^H\aA)}
\end{align}

 Chordal distance is known to be proportional to the degree of orthogonality between the subspaces. Note that, the chordal distance is invariant to the choice of generator matrices.

\subsection{Spread of Subspaces}
Let $\{\HH_l\}_{l=1}^L$ be $L$ matrices of size $N \times M$. Thus, each of these matrices would correspond to planes/subspaces $\{\mathcal{H}_l\}_{l=1}^L \in \Grass$ with $\{\hHH_l\}_{l=1}^L \in \C^{N \times M}$ as their generator matrices. In order to define the spread of these subspaces, consider the following 
\begin{align}
\label{eq:mean}
\FF = \underset{\FF' \text{ s.t. } \FF'^H\FF' = \II_M}{\operatorname{arg min}} \: \: \sum_{l = 1}^L d_c^2(\FF', \hHH_l)
\end{align}

Let $\mathcal{F} \in \Grass$ be the plane corresponding to $\FF$. Thus, $\mathcal{F}$ can be considered as the mean of subspaces $\{\mathcal{H}_i\}_{i=1}^L$. Quite naturally, we can define the spread of these subspaces as
\begin{align}
\label{eq:variance}
f &= \underset{\FF' \text{ s.t. } \FF'^H\FF' = \II_M}{\operatorname{\min}} \: \: \sum_{l = 1}^L d_c^2(\FF', \hHH_l)
\end{align}
The problem in (\ref{eq:mean}) can be simplified as follows :
\begin{align}
\FF &= \underset{\FF' \text{ s.t. } \FF'^H\FF' = \II_M}{\operatorname{arg min}} \: \: \sum_{l = 1}^L d_c^2(\FF', \hHH_l) \nonumber\\
&= \underset{\FF' \text{ s.t. } \FF'^H\FF' = \II_M}{\operatorname{arg min}} \: \: \sum_{l = 1}^L \left(M - \text{Tr}(\FF'^H\hHH_l\hHH_l^H\FF') \right)\nonumber\\
&= \underset{\FF' \text{ s.t. } \FF'^H\FF' = \II_M}{\operatorname{arg max}} \: \: \sum_{l = 1}^L \text{Tr}(\FF'^H\hHH_l\hHH_l^H\FF') \nonumber\\
&= \underset{\FF' \text{ s.t. } \FF'^H\FF' = \II_M}{\operatorname{arg max}} \: \:  \text{Tr}\left(\FF'^H\left(\sum_{l = 1}^L \hHH_{n,l}\hHH_{n,l}^H\right)\FF'\right) \nonumber\\
&= \left[\vv_1(\bbB_n) \quad \vv_2(\bbB_n) \: \ldots \: \vv_M(\bbB_n)\right]
\end{align}

where $\bbB = \sum\limits_{l=1}^{K}\hHH_l\hHH_l^H$ and $\vv_l(\HH)$ denotes the left singular vector of $\HH$ which corresponds to the $l$th largest singular value \cite{matrixhandbook}. Thus, $f$ in (\ref{eq:variance}) is given by 
\begin{align}
f &= \sum_{l = 1}^L d_c^2(\FF, \hHH_l) \nonumber \\
\Rightarrow f &= LM - \text{Tr}\left(\FF^H\left(\sum_{l = 1}^L \hHH_{n,l}\hHH_{n,l}^H\right)\FF\right) \nonumber \\
\Rightarrow f &= LM - \sum_{m=1}^{M}\lambda_m\left(\sum_{l = 1}^L\hHH_l\hHH_l^H\right) \nonumber\\
\Rightarrow f &\stackrel{(i)}{=} \sum_{m=M+1}^{2M}\lambda_m\left(\sum_{l = 1}^L\hHH_l\hHH_l^H\right)
\label{eq:moadef1}
\end{align}

where $\lambda_m(\HH)$ denotes the $m$th largest singular value of $\HH$ and $(i)$ holds because 
\begin{gather}
\sum_{m=1}^{2M}\lambda_m\left(\sum_{l = 1}^L\hHH_l\hHH_l^H\right) = \text{Tr}\left(\sum_{l = 1}^L\hHH_l\hHH_l^H\right) \nonumber\\
= \sum_{l = 1}^L\text{Tr}\left(\hHH_l\hHH_l^H\right) = LM
\end{gather}

 Smaller values of $f$ imply that the interfering subspaces are closely aligned and in the case of perfect alignment, $f=0$. In view of this, if the matrices $\{\HH_l\}_{l=1}^L$ correspond to the channels between a user and interfering base stations, $f$ can be defined as a measure of alignment for the user. The computation of the mean $\FF$ and measure of alignment $f$ involve singular value decomposition (SVD) and hence are expensive to compute. In what follows, we will explore approximations for the measure of alignment function $f$.
 
\subsubsection{$3$-Transmitter Case}
In the case of $3$-transmitter interference channels, each user has $2$ interference subspaces and $f$ has the form $\sum\limits_{m=M+1}^{2M}\lambda_m(\hHH_1\hHH_1^H + \hHH_2\hHH_2^H)$. In the following Lemma, we will find closed form expression for the eigenvalues of $\hHH_1\hHH_1^H + \hHH_2\hHH_2^H$.

\begin{Lem}
	\label{eigpangles}
	If $\hHH_1, \hHH_2 \in \C^{2M \times M}$ are the generator matrix of the subspaces $\mathcal{H}_1, \mathcal{H}_2 \in \Grassss$, eigenvalues of $\hHH_1\hHH_1^H + \hHH_2\hHH_2^H$ can be represented in descending order as
	\begin{align}
	\underbrace{1 + \cos(\theta_1), \ldots, 1 + \cos(\theta_M)}_{M}, \underbrace{1 - \cos(\theta_M), \ldots, 1 - \cos(\theta_1)}_{M}
	\end{align}
	where $\theta_m$ is the $m$th smallest principal angle between $\mathcal{H}_1$ and $\mathcal{H}_2$.
\end{Lem}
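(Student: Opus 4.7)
The plan is to reduce the eigenvalue problem for the rank-$2M$ matrix $\hHH_1\hHH_1^H + \hHH_2\hHH_2^H$ to one involving only the cross-Gram matrix $\hHH_1^H\hHH_2$, whose singular values are precisely $\cos\theta_1,\ldots,\cos\theta_M$ by the standard characterization of principal angles in terms of orthonormal basis matrices.

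First I would stack the generators: let $\bB = [\hHH_1\;\hHH_2] \in \C^{2M\times 2M}$, so that $\hHH_1\hHH_1^H + \hHH_2\hHH_2^H = \bB\bB^H$. Because $\bB$ is square, $\bB\bB^H$ and $\bB^H\bB$ share the same spectrum, so it suffices to diagonalize
\begin{equation*}
\bB^H\bB \;=\; \begin{pmatrix} \II_M & \hHH_1^H\hHH_2 \\ \hHH_2^H\hHH_1 & \II_M \end{pmatrix},
\end{equation*}
using $\hHH_i^H\hHH_i = \II_M$.

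Next I would take the SVD $\hHH_1^H\hHH_2 = \UU\boldsymbol{\Sigma}\VV^H$ with $\boldsymbol{\Sigma}=\mathrm{diag}(\cos\theta_1,\ldots,\cos\theta_M)$, and conjugate $\bB^H\bB$ by the block-unitary $\mathrm{diag}(\UU,\VV)$. This yields the matrix with diagonal blocks $\II_M$ and off-diagonal blocks $\boldsymbol{\Sigma}$. A permutation that interleaves row/column $m$ with row/column $m+M$ decomposes this into a block-diagonal matrix consisting of $M$ independent $2\times 2$ blocks $\left(\begin{smallmatrix}1 & \cos\theta_m\\ \cos\theta_m & 1\end{smallmatrix}\right)$, whose eigenvalues are obviously $1\pm\cos\theta_m$. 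Collecting these over $m=1,\ldots,M$ gives the multiset $\{1+\cos\theta_m\}\cup\{1-\cos\theta_m\}$.

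Finally I would sort. Since $0\le\theta_1\le\cdots\le\theta_M\le\pi/2$, the cosines are non-negative and non-increasing, so $1+\cos\theta_1\ge\cdots\ge 1+\cos\theta_M\ge 1\ge 1-\cos\theta_M\ge\cdots\ge 1-\cos\theta_1$, reproducing the ordering in the statement. I do not anticipate a genuine obstacle; the only subtlety is invoking the standard fact that the principal angles between $\mathcal{H}_1$ and $\mathcal{H}_2$ satisfy $\cos\theta_m = \lambda_m(\hHH_1^H\hHH_2)$ for orthonormal generators, which follows from the variational definition in (\ref{eq:pangles}) and should be cited rather than re-derived.
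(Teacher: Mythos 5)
Your proof is correct, but it reaches the $2\times 2$ block form by a different route than the paper. The paper first proves an invariance lemma (the spectrum of $\sum_i \hHH_i\hHH_i^H$ is unchanged under a common unitary $\hHH_i \rightarrow \OO\hHH_i$) and then invokes the canonical position of a pair of subspaces, i.e.\ the existence of $\OO$ putting $\hHH_1, \hHH_2$ into the $[\II_M;\mathbf{0}]$ and $[\cos;\sin]$ diagonal form, before reading off the eigenvalues of the resulting block matrix. You instead stack $\bB = [\hHH_1\;\hHH_2]$, use that the square matrices $\bB\bB^H$ and $\bB^H\bB$ are isospectral, and diagonalize the Gram matrix $\left(\begin{smallmatrix}\II_M & \hHH_1^H\hHH_2\\ \hHH_2^H\hHH_1 & \II_M\end{smallmatrix}\right)$ via the SVD of the cross-Gram block, whose singular values are $\cos\theta_m$ by the Bj\"orck--Golub characterization of principal angles. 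The two external facts being cited are essentially equivalent (the canonical-position theorem is the CS decomposition, which is also what underlies the singular-value characterization), so neither argument is deeper than the other; your version is somewhat more self-contained in that it needs only the SVD of an $M\times M$ matrix and avoids constructing the common rotation $\OO$ explicitly, while the paper's invariance lemma is stated for general $L$ and so sets up the (unrealized) possibility of treating more than two interference subspaces the same way. One small point of care: your key identity $\bB\bB^H$ isospectral to $\bB^H\bB$ exploits that $\bB$ is square, which holds precisely because there are exactly two $M$-dimensional generators in $\C^{2M}$; for completeness you could note that even without squareness the nonzero spectra agree, so nothing essential hinges on this coincidence. Also make sure to state explicitly that the interleaving permutation is a similarity (conjugation by a permutation matrix), so the eigenvalues of the block-diagonal collection of $\left(\begin{smallmatrix}1 & \cos\theta_m\\ \cos\theta_m & 1\end{smallmatrix}\right)$ blocks are indeed those of the original matrix; this is the same implicit step the paper takes when it reads the eigenvalues off (\ref{eq:lemma1aux}).
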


\begin{proof}
	We will first show that eigenvalues of $\sum\limits_{i=1}^{L} \hHH_i\hHH_i^H$ are invariant under a common rotation transformation on the corresponding subspaces $\{\mathcal{H}_i\}_{i=1}^L \in \Grass$, i.e., $\hHH \rightarrow \OO\hHH$ where $\OO \in \C^{2M \times 2M}$ is an arbitrary member of the orthogonal group $\mathcal{O}(2M)$ \cite{conway1996}.
	
	Let $\left(\sum\limits_{i=1}^{L} \hHH_i\hHH_i^H\right)\vv = \lambda\vv$ and consider
	 \begin{align*}
	 &\left(\sum\limits_{i=1}^{L} (\OO\hHH_i)(\OO\hHH_i)^H\right)\vv{'} = \lambda'\vv{'} \\
	\Rightarrow  & \OO\left(\sum\limits_{i=1}^{L} \hHH_i\hHH_i^H\right)\OO^H\vv' = \lambda'\vv{'}
	 \end{align*}
	 On right multiplying $\OO^H$ on both sides and using the fact that $\OO^H\OO = \II_N$, we have
	 \begin{align*}
	 &(\OO^H\OO)\left(\sum\limits_{i=1}^{L} \hHH_i\hHH_i^H\right)\OO^H\vv = \lambda'\OO^H\vv{'} \\
	 \Rightarrow  &\left(\sum\limits_{i=1}^{L} \hHH_i\hHH_i^H\right)\left(\OO^H\vv\right) = \lambda'(\OO^H\vv{'}) 
	 \end{align*}
	 
	 Comparing with $\left(\sum\limits_{i=1}^{L} \hHH_i\hHH_i^H\right)\vv = \lambda\vv$, we have that $\lambda = \lambda'$ and $\vv = \OO^H\vv'$.
	 
	 Thus, it follows that
	 	\begin{align*}
	 	 \lambda_m\left(\hHH_1\hHH_1^H + \hHH_2\hHH_2^H\right) = \lambda_m\big((\OO\hHH_1)(\OO\hHH_1)^H \\+ (\OO\hHH_2)(\OO\hHH_2)^H\big)
	 	 \end{align*}
	 	 
	 	 Let columns of $\hHH_1$ and $\hHH_2$ be the corresponding principal vectors, $\{\mathbf{a}_i\}_{i=1}^M$ and $\{\mathbf{b}_i\}_{i=1}^M$, respectively. Define ${\hHH '}_1 = \OO{\hHH}_1$ and ${\hHH '}_2 = \OO{\hHH '}_2$. We can choose $\OO \in \mathcal{O}(2M)$ such that ${\hHH '}_1$ and ${\hHH '}_2$ become \cite{conway1996}%
	 	 \begin{gather}
	 	 {\hHH '}_1 =
	 	 \begin{bmatrix}
	 	 \II_{M \times M} \\
	 	 \hline
	 	 \mathbf{0}_{M \times M}
	 	 \end{bmatrix} \\
	 	 {\hHH '}_2 =
	 	 \begin{bmatrix}
	 	 \text{cos}(\theta_1) & 0 & \cdots & 0 \\
	 	 0 & \text{cos}(\theta_2) & \cdots & 0 \\
	 	 \vdots  & \vdots  & \ddots & \vdots  \\
	 	 0 & 0 & \cdots & \text{cos}(\theta_M)\\
	 	 \hline
	 	 \text{sin}(\theta_1) & 0 & \cdots & 0 \\
	 	 0 & \text{sin}(\theta_2) & \cdots & 0 \\
	 	 \vdots  & \vdots  & \ddots & \vdots  \\
	 	 0 & 0 & \cdots & \text{sin}(\theta_M)
	 	 \end{bmatrix}
	 	 \end{gather}
	 
	 	Thus ${\hHH '}_1 {{\hHH}_1}^{'H} + {\hHH '}_2{\hHH}_2^{'H}$ has the  structure as in (\ref{eq:lemma1aux}). The eigenvalues of the matrix in (\ref{eq:lemma1aux}) can be trivially found to be : $1 + \cos(\theta_1)$, $1 + \cos(\theta_2)$, $\ldots$, $1 + \cos(\theta_M)$, $1 - \cos(\theta_M)$, $1 - \cos(\theta_{M-1})$, $\ldots$, $1 - \cos(\theta_1)$. This completes the proof.
\end{proof}
\begin{figure*}[t!]
	\normalsize
	\begin{align}
		\label{eq:lemma1aux}
		 	&\sum\limits_{l=1}^{2}{\hHH '}_l{\hHH_l}^{'H} =
		 	\begin{bmatrix}[ccc|ccc]
		 	1 + \text{cos}^2(\theta_1) & \cdots & 0 & \text{sin}(2\theta_1)/2 & \cdots & 0 \\
		 	\vdots  & \ddots & \vdots  & \vdots & \ddots & \vdots\\
		 	0 & 0 & 1 + \text{cos}^2(\theta_M)& 0 & \cdots & \text{sin}(2\theta_M)/2\\
		 	\hline
		 	\text{sin}(2\theta_1)/2 & \cdots & 0 & \text{sin}^2(\theta_1) & \cdots & 0\\
		 	\vdots  & \vdots  & \vdots & \vdots  & \ddots & \vdots\\
		 	0 & \cdots & \text{sin}(2\theta_M)/2)& 0 & \cdots & \text{sin}^2(\theta_M)
		 	\end{bmatrix}
	\end{align}
	\hrulefill
	\vspace*{4pt}
\end{figure*}

It follows from Lemma \ref{eigpangles} that
{
	\begin{align*}
	f = \sum\limits_{m=M+1}^{2M}\lambda_m(\hHH_1\hHH_1^H + \hHH_2\hHH_2^H) =\sum\limits_{m=1}^{M} 1 - \text{cos}(\theta_m)
	\end{align*}
}
Since $\text{cos}(\theta_m) \geq \text{cos}^2(\theta_m) \: \forall \: \theta_m  \in [0, \pi/2]$, we have that
{
	\begin{gather*}
	\sum\limits_{m=1}^{M} 1 - \text{cos}(\theta_m) \leq \sum\limits_{m=1}^{M} 1 - \text{cos}^2(\theta_m)\\
	\Rightarrow f \leq d_c^2(\hHH_1, \hHH_2) 
	\end{gather*}
}	 
We can redefine $f$ such that $f = d_c^2(\hHH_1, \hHH_2)$, which is intuitive as $d_c^2(\hHH_1, \hHH_2)$ is proportional to the degree of orthogonality between the corresponding subspaces and thus has the property that its value decreases as the interfering subspaces get closer or more aligned.

\subsubsection{General Case}
Unlike the $3$-transmitter case, it is difficult to obtain closed-form expressions for eigenvalues in the general case where the network has $K$ transmitters. In a $K$ transmitter network, there are $(K-1)$ interference subspaces at each user and thus $f$ has the form $\sum\limits_{m=M+1}^{2M}\lambda_m\left(\sum\limits_{l = 1}^{K-1}\hHH_l\hHH_l^H\right)$. In the following Lemma, we extend the bound for the general case of $K$-transmitter interference channels.
\begin{Lem} 
	For $K$-transmitter interference channels, the measure of alignment $f$ is bounded above by
	$\underset{1\leq j \leq K-1}{\operatorname{ min}}\sum\limits_{l=1}^{K-1} d_c^2(\hHH_j, \hHH_l)$
\end{Lem}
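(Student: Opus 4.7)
The plan is to exploit the variational characterization of $f$ that was already established in equation (\ref{eq:variance}):
\begin{align*}
f \;=\; \min_{\FF'\,:\,\FF'^H\FF'=\II_M}\;\sum_{l=1}^{K-1} d_c^2(\FF',\hHH_l).
\end{align*}
Because $f$ is defined as the value of a minimization over the set of semi-orthonormal $2M\times M$ matrices, it is automatically bounded above by the objective evaluated at \emph{any} feasible choice of $\FF'$. So the entire proof amounts to producing convenient feasible candidates and simplifying.

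The natural candidates are the generator matrices $\hHH_1,\ldots,\hHH_{K-1}$ of the $K-1$ interference subspaces themselves. By the definition of a generator matrix each $\hHH_j$ satisfies $\hHH_j^H\hHH_j=\II_M$, hence is feasible. Plugging in $\FF'=\hHH_j$ gives
\begin{align*}
f \;\le\; \sum_{l=1}^{K-1} d_c^2(\hHH_j,\hHH_l),
\end{align*}
where the $l=j$ term drops out because the chordal distance from a subspace to itself is zero. Since this inequality holds for \emph{each} $j\in\{1,\ldots,K-1\}$, the tightest bound is obtained by minimizing over $j$, yielding the claimed inequality.

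The argument has essentially no hard step: the whole proof is a one-line application of the fact that a minimum is bounded above by any particular value of its objective. The only thing worth verifying is consistency with the three-transmitter case of Lemma \ref{eigpangles}, where $K-1=2$ and both $j=1$ and $j=2$ yield $d_c^2(\hHH_1,\hHH_2)$, recovering the earlier scalar bound $f\le d_c^2(\hHH_1,\hHH_2)$. If anything, the subtle point is conceptual rather than technical, namely checking that $f$ as used in the general-case statement is still the eigenvalue-sum quantity of (\ref{eq:moadef1}) (for which the variational form (\ref{eq:variance}) was derived) rather than the redefined chordal-distance surrogate introduced only for the three-transmitter special case.
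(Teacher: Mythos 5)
Your proof is correct and is essentially identical to the paper's: both arguments observe that $f$ is the minimum of $\sum_{l}d_c^2(\FF',\hHH_l)$ over semi-orthonormal $\FF'$, that each generator matrix $\hHH_j$ is a feasible candidate, and that minimizing the resulting family of upper bounds over $j$ gives the claim. Your closing remark about using the eigenvalue-sum definition of $f$ from (\ref{eq:moadef1}) rather than the three-transmitter chordal-distance surrogate is a fair point of care, but it does not change the argument.
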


\begin{proof}
	
	Let $\FF$ be the mean of $\{\hHH_l\}_{l=1}^K$. Since $\FF$ is the minimizer of $\sum\limits_{l = 1}^{K-1} d_c^2(\FF', \HH_{l})$, it follows for any arbitrary $1 \leq j \leq K-1$ that
		\begin{gather}
		\sum\limits_{l=1}^{K-1} d_c^2(\FF, \hHH_{l})  \leq \sum\limits_{l=1}^{K-1} d_c^2(\hHH_{j}, \hHH_{l}) \nonumber\\
		\Rightarrow f \leq \sum\limits_{l=1}^{K-1} d_c^2(\hHH_{j}, \hHH_{l})
		\label{eq:boundproof}
		\end{gather}
		
	Since (\ref{eq:boundproof}) holds for any arbitrary $j$, it must hold for all $j \in \{1,\ldots, K-1\}$ and thus
		\begin{equation}
		f \leq \underset{1 \leq j \leq K-1}{\operatorname{ min}}\sum\limits_{l=1}^{K-1} d_c^2(\hHH_{j}, \hHH_{l})
		\end{equation}
	This completes the proof.
\end{proof}

From (\ref{eq:moadef1}), we have $f = \sum\limits_{l=1}^{K-1} d_c^2(\FF, \hHH_l)$, where $\FF$ corresponds to the mean of the subspaces corresponding to $\{\hHH_l\}_{l=1}^{K-1}$. Since computing the mean $\FF$ or even $f$ (directly) is computationally prohibitive, we can approximate the mean $\FF$ by an element in $\{\hHH_l\}_{l=1}^{K-1}$ which is nearest to it. Indeed $\hHH_{\hat{j}}$ where $\hat{j} = \underset{1\leq j \leq K-1}{\operatorname{\min}}{\operatorname{arg min}}\sum\limits_{l=1}^{K-1} d_c^2(\hHH_j, \hHH_l)$ is closest to the mean $\FF$ and we can redefine the measure of alignment as follows 
\begin{align}
f = \underset{1\leq j \leq K-1}{\operatorname{\min}}\sum\limits_{l=1}^{K-1} d_c^2(\hHH_j, \hHH_l)
\end{align}

Note that this approximation to the actual measure of alignment is cheaper to compute. Also, for $K=3$, the above expression reduces to the one obtained for the $3$-transmitter interference channel.


\section{Opportunistic Interference Alignment through User Selection}
\label{oiaus}
In this section, we consider the user selection problem (refer Section \ref{systemmodelus}) in which one user is selected in each cell such that opportunistic IA is achieved. The $n$th user in cell $k$ calculates the measure of alignment function $f_n^k$ as follows
\begin{align}
f_n^k = \underset{\stackrel{1\leq j \leq K}{j \neq k}}{\operatorname{\min}}\sum\limits_{l=1, l \neq k}^{K} d_c^2(\hHH_{n,j}^k, \hHH_{n,l}^k)
\end{align}

where $\hHH_{n,l}^k$ is an arbitrary generator matrix for $\HH_{n,l}^k$. Following this, each user feeds the measure of alignment back to its corresponding transmitter. After receiving this information from their users, transmitter $k$ selects the user, $n_k^*$, with the minimum value of measure of alignment
\begin{align}
n_k^* = \underset{1\leq n \leq K}{\operatorname{arg \min}} \: f_n^k
\end{align}

The selected user $n_k^*$ in cell $k$ employs the post-processing matrix $\UU_{n_k^*}^k$ which minimizes the interference leakage \cite{gomadam2011} as follows
\begin{align}
\label{eq:postus}
\UU_{n_k^*}^k &=\underset{\UU}{\operatorname{arg min}} \: \: \text{Tr}\left(\UU^H\left(\sum\limits_{l=1, l \neq k}^{K}\HH_{n_k^*,l}^k(\HH_{n_k^*,l}^k)^H\right)\UU\right) \nonumber \\
&= \left[\vv_{M+1}\left(\bB_{n_k^*}\right), \: \vv_{M+2}\left(\bB_{n_k^*}\right), \: \ldots, \: \vv_{2M}\left(\bB_{n_k^*}^k\right)\right]
\end{align}
where $\bB_{n_k^*}^k = \sum\limits_{l=1, l\neq k}^{K}\HH_{{n_k^*},l}^k(\HH_{{n_k^*},l}^k)^H$. The achievable sum-rate \cite{OIA2} for the network is given by 
\begin{align}
\label{oiaussumrate}
R_{sum} =\sum\limits_{k=1}^{K}\log_2\frac{\Big|\II_{M} + \frac{P}{M}\sum\limits_{l=1}^{K}\UU_{n_k^*}^H\HH_{n_k^*,l}^k(\HH_{n_k^*,l}^k)^H\UU_{n_k^*}\Big|}{\Big|\II_{M} + \frac{P}{M}\sum\limits_{{l=1},\:{l \neq k}}^{K}\UU_{n_k^*}^H\HH_{n_k^*,l}^k(\HH_{n_k^*,l}^k)^H\UU_{n_k^*}\Big|}
\end{align}

\section{Opportunistic Interference Alignment through User Pairing}
\label{oiaup}

In this section, we consider the problem of finding transmitter-receiver pairing configuration (refer Section \ref{systemmodelup}) in order to achieve IA opportunistically. For OIA in the user pairing framework, the receivers feed back the measure of alignment to a central node, which in turn decides the pairing configuration.

Each user receives $K$, $M$-dimensional signals among which atmost one can be the desired signal. Unlike the OIA with user selection case, the desired signal is not predefined and it will depend on the channel conditions for all the users in the network. The measure of alignment at user $n$ when it is paired with the $k$th BS can be defined as
\begin{align}
\label{moaeq}
f_{n,k} = \underset{\stackrel{1\leq j \leq K}{j \neq k}}{\operatorname{\min}}\sum\limits_{l=1, l \neq k}^{K} d_c^2(\gGG_{n,j}, \gGG_{n,l})
\end{align}

where $\gGG_{n,l}$ is an arbitrary generator matrix for $\GG_{n,l}$. Each user thus computes $K$ measure of alignment functions, $\{f_{n,k}\}_{k=1}^K$ corresponding to each BS.

Let us define the vector of measure of alignment at user $n$, $\ff_n$ as
\begin{equation}
\ff_n = [f_{n,1}, \: f_{n,2}, \: \ldots, \: f_{n,K}]^T
\end{equation}
Each user feedbacks its corresponding measure of alignment vectors $\{\ff_n\}_{n=1}^N$ to a central node. The center aggregates the data from all the users and forms the $N \times K$ feedback matrix, $\FF$ defined as
\begin{equation}
\FF = [\ff_1, \: \ff_2, \: \ldots, \: \ff_N]^T
\end{equation}

Each entry in the matrix $\FF$ corresponds to a pair in the original network. The smaller the value of the entry, the more likely it is for the corresponding link to have the interferences aligned and thus more likely to be chosen in the final user pairing solution. Having obtained the matrix $\FF$, the center can choose $K$ non-conflicting pairs which constitute the minimum sum for the measure of alignment. Therefore, the optimization problem can be formulated as
\begin{subequations}
\label{up_opti}
\begin{align}
\underset{\mathbf{P}}{\text{min}} & \quad \sum_{i=1}^{N}\sum_{j=1}^{K}P_{i,j}f_{i,j} \\
\text{subject to} & \quad \sum_{j}P_{i,j} \leq 1 \; \forall \: i\\
& \quad \sum_{i}P_{i,j} = 1 \; \forall \: j  \\
& \quad P_{i,j} \in \{0, 1\} \; \forall \: i, j 
\end{align}
\end{subequations}
This optimization can be solved efficiently by the rectangular Hungarian algorithm \cite{bourgeois1971}. After the optimal pairing configuration $\mathbf{P}^*$ has been found, each user which is connected to a BS can employ a post-processing matrix which minimizes the interference leakage similar to (\ref{eq:postus}). Let $n_k^*$ be the user paired with BS $k$, i.e., $P^*_{n_k^*,k} = 1$. The expression for achievable sum-rate will be same as (\ref{oiaussumrate}).

\begin{figure}[t]
	\label{fig:K3sumrate}
	\includegraphics[width=0.5\textwidth]{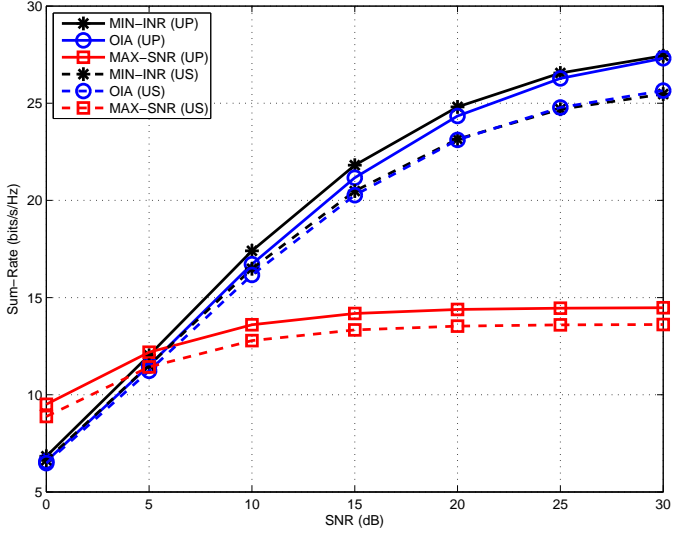}
	\caption{$3$-transmitter MIMO IC with $M = 3$ and $N=30$ }
\end{figure}

\begin{figure}[t]
	\label{fig:K3complexity}
	\includegraphics[width=0.5\textwidth]{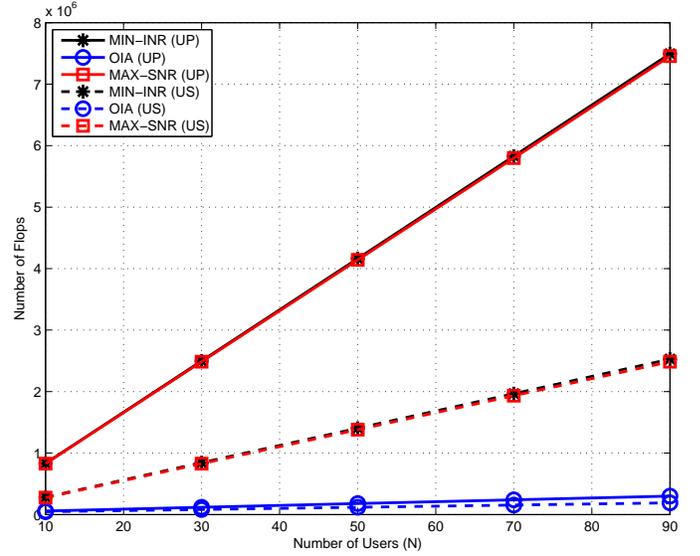}
	\caption{Complexity for $3$-transmitter MIMO IC with $M = 3$ }
\end{figure}

\section{Performance Comparison}
\label{results}
In this section, we compare the performance in terms of sum-rate and computational complexity of the proposed OIA algorithm with the conventional MAX-SNR and MIN-INR schemes \cite{OIA1, OIA2}. MAX-SNR and MIN-INR have been proposed for the user selection framework in \cite{OIA1, OIA2}. We extend them for the user pairing framework as done for OIA in Section \ref{oiaup}. In what follows, US and UP denote user selection and user pairing, respectively.

\subsection{Complexity Analysis}
\label{complexity}
In this section, we will discuss the computational complexity of the algorithms using flop counts. The complexity of an operation is counted as total number of flops required which is defined as a real floating point operation and we denote it by $\psi$. The flop counts for some typical operations for a complex matrix $\aA \in \C^{m \times n}$ with $m \geq n$ are
\begin{subequations}
	\begin{gather}
	\label{flopoperations}
	\psi(\aA + \aA) =  2mn \\
	\psi(||\aA||_F) = 4mn \\
	\psi(\text{GSO}(\aA)) = 8mn^2 - 2mn \\
	\psi(\text{SVD}(\aA)) = 24mn^2 + 48mn^2 + 54n^3 \\
	\psi(\text{MUL}(\aA)) = \psi(\aA\aA^H) = 8mn^2 - 2mn
	\end{gather}
\end{subequations}
where GSO stands for Gram-Schmidt orthogonalization, SVD stands for singular value decomposition and MUL$(\aA)$ denotes the operation $\aA\aA^H$.

The proposed OIA in the user selection framework requires $(K-1)$ GSO operations, $(K-1)$ MUL operations and ${{K-1} \choose 2}$ matrix subtractions as well as $||.||_F$ operations at each user in every cell. At the selected user, MUL operation for $(K-1)$ times, $(K-2)$ matrix additions and a single SVD is required. Thus the total complexity for OIA in the user selection framework is given by 
\begin{align}
\psi_\text{OIA-US} =  K  \Big(&{S}\left(N_R^3(4K-4) + N_R^2(3K^2 - 11K + 8)\right) \nonumber \\
& + \left(N_R^3(124 + 2K) + N_R^2(K-3)\right) \Big)
\end{align}

OIA in the user pairing framework requires $K$ GSO operations, $K$ MUL operations and ${K \choose 2}$ matrix subtractions as well as $||.||_F$ operations at each user. At the $K$ selected users, MUL operation for $(K-1)$ times, $(K-2)$ matrix additions and a single SVD is required. Thus the total complexity for OIA in the user pairing framework is given by 
\begin{align}
\psi_\text{OIA-UP} = \:  N \times \left(N_R^3(4K) + N_R^2(3K^2-5K)\right) + \nonumber \\
K \times \left(N_R^3(124 + 2K) + N_R^2(K-3)\right)
\end{align}

The complexity for MAX-SNR and MIN-INR in the user selection framework denoted by $\psi_{\text{MAX-SNR-US}}$ and $\psi_{\text{MIN-INR-US}}$, respectively, is given in \cite{OIA2}. MIN-INR in the user pairing framework requires $K$ MUL operations, $2K$ matrix additions and a single SVD at every user. The total complexity for MIN-INR is given by
\begin{align}
\psi_\text{MIN-INR-UP} = \:  N \times \left(N_R^3(128K) + N_R^2(3K)\right) 
\end{align}

The MAX-SNR scheme requires $K$ GSO operations and $K$ SVD at every user in the user pairing framework. Thus, the total complexity for MAX-SNR is given by
\begin{align}
\psi_\text{MAX-SNR-UP} = \:  N \times \left(N_R^3(128K) - N_R^2(K)\right) 
\end{align}

Note that we have ignored the complexity of solving the optimization problem in \eqref{up_opti} which arises in the user pairing framework. This is because the computation happens only once at the center and not at the mobile users.

Fig.~$4$ and Fig.~$6$ show the plot of computational complexity vs. total number of users $N$ for a $3$-transmitter MIMO IC with $M=3$ and a $4$-transmitter MIMO IC with $M=6$, respectively. It can be observed that the complexity of OIA is only a small fraction of the complexity of MIN-INR and MAX-SNR schemes. Moreover, user pairing when compared to user selection has roughly the same complexity in case of proposed OIA, but the same is not true for both, MIN-INR and MAX-SNR.

\subsection{Sum-Rate}
Fig.~$3$ and Fig.~$5$ show the sum-rate vs. signal-to-noise ratio (SNR) plot for the proposed OIA and the conventional schemes for a $3$-transmitter MIMO IC with $M=3$ and a $4$-transmitter MIMO IC with $M=6$, respectively. It can be observed that the proposed OIA achieves sum-rates close to MIN-INR but at a significantly lower computational complexity. Moreover, with the same number of total users in the network, user pairing outperforms user selection.

\begin{figure}[t!]
	\label{fig:K4sumrate}
	\includegraphics[width=0.5\textwidth]{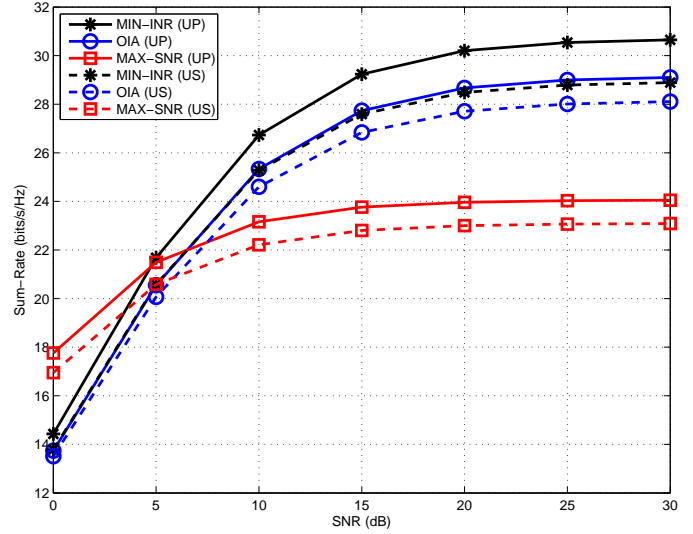}
	\caption{$4$-transmitter MIMO IC with $M = 6$ and $N=40$ }
\end{figure}

\begin{figure}[th!]
	\label{fig:K4complexity}
	\includegraphics[width=0.5\textwidth]{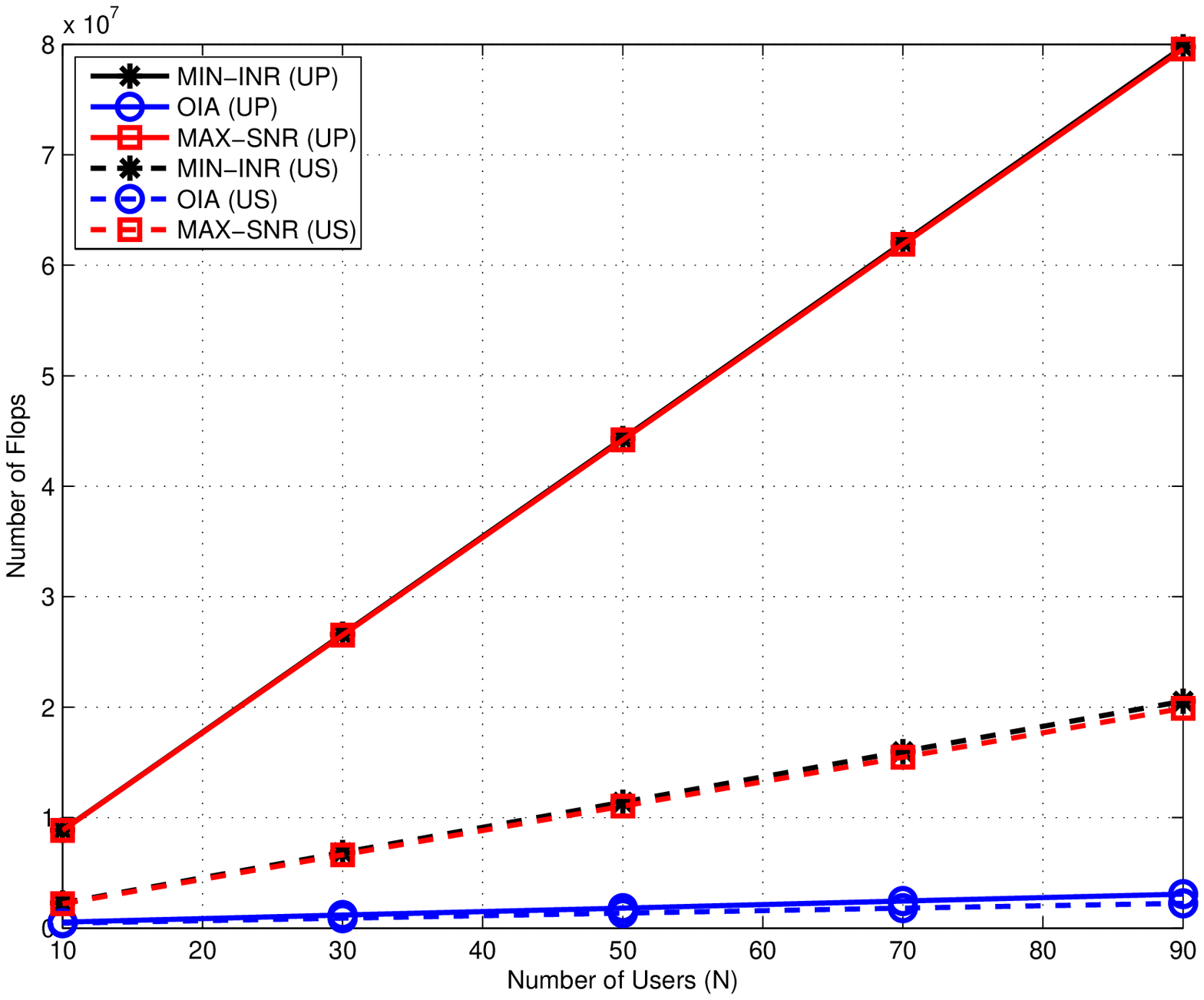}
	\caption{Complexity for $4$-transmitter MIMO IC with $M = 6$}
\end{figure}

\begin{figure}[t!]
	\label{fig:ratevsN}
	\includegraphics[width=0.51\textwidth]{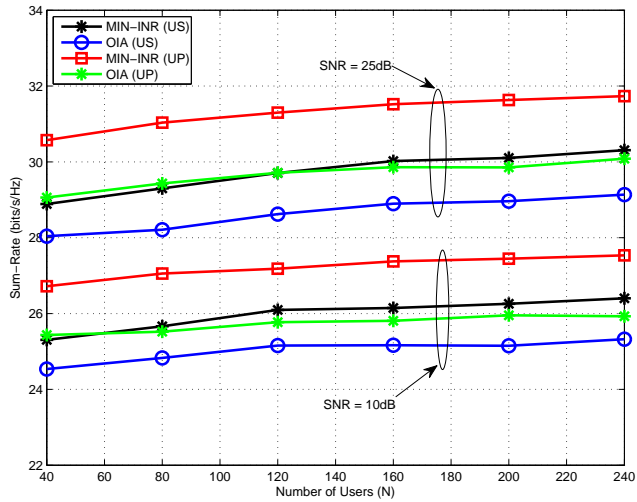}
	\caption{$4$-transmitter MIMO IC with $M = 6$ and fixed SNR}
\end{figure}

\begin{figure}[t!]
	\label{fig:ratevsT}
	\includegraphics[width=0.51\textwidth]{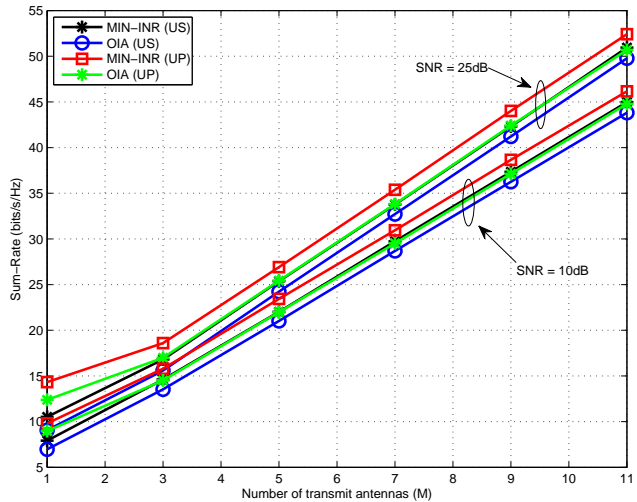}
	\caption{$4$-transmitter MIMO IC with $N = 100$ and fixed SNR}
\end{figure}

Fig.~$7$ shows the plot of sum-rate vs. the total number of users for a $4$-transmitter MIMO IC with $M=6$ and SNRs of $10$ dB and $25$ dB. As expected, the performance of all the algorithms improve as the number of users is increased. Also, user pairing provides more than a $4$-fold gain over user selection in terms of total number of users required to achieve similar sum-rate performance. With respect to the number of users, the gaps in sum-rate performance for different algorithms is almost constant. Fig.~$8$ shows the plot of sum-rate vs. the number of antennas $M$ for a $4$-transmitter MIMO IC with $N=100$ and SNRs of $10$ dB and $25$ dB. It can be observed that the sum-rate increases almost linearly with the number of transmit antennas for all the algorithms.
\section{Conclusion}
\label{conclusion}
In this paper, we have considered two different system models, namely, user selection and user pairing for $K$-transmitter MIMO interference channels. By exploiting multiuser diversity, we propose low complexity opportunistic interference alignment (OIA) algorithms for both the models. The proposed OIA algorithms are compared with conventional schemes, MIN-INR and MAX-SNR, and found to achieve comparable sum-rates but at a significantly reduced computational complexity.

\bibliographystyle{IEEEtran}
\bibliography{IEEEabrv,paper}

\end{document}